\documentclass[pra,aps,nopacs,onecolumn,twoside,superscriptaddress]{revtex4}

\usepackage{multirow, makecell}
\usepackage{amsmath,amsfonts,amssymb,caption,color,epsfig,graphics,graphicx,hyperref,latexsym,mathrsfs,revsymb,theorem,url,verbatim,epstopdf,enumerate}
\usepackage{amsmath,amsfonts,amssymb,caption,hyperref,color,epsfig,graphics,graphicx,latexsym,mathrsfs,revsymb,theorem,url,verbatim,epstopdf,cleveref}
\usepackage{fontenc}
\usepackage{cases}
\usepackage[T1]{fontenc}
\let\polishl\l
\hypersetup{colorlinks,linkcolor={blue},citecolor={blue},urlcolor={red}}
\usepackage{lmodern} 
\usepackage{tipa}
\usepackage{textcomp} 

\newtheorem{definition}{Definition}
\newtheorem{proposition}[definition]{Proposition}
\newtheorem{lemma}[definition]{Lemma}

\newtheorem{theorem}[definition]{Theorem}
\newtheorem{corollary}[definition]{Corollary}
\newtheorem{conjecture}[definition]{Conjecture}

\newtheorem{remark}[definition]{Remark}
\newtheorem{example}[definition]{Example}
\newtheorem{question}[definition]{Question}
\newtheorem{memo}[definition]{Memo}

\def\squareforqed{\hbox{\rlap{$\sqcap$}$\sqcup$}}
\def\qed{\ifmmode\squareforqed\else{\unskip\nobreak\hfil
		\penalty50\hskip1em\null\nobreak\hfil\squareforqed
		\parfillskip=0pt\finalhyphendemerits=0\endgraf}\fi}
\def\endenv{\ifmmode\;\else{\unskip\nobreak\hfil
		\penalty50\hskip1em\null\nobreak\hfil\;
		\parfillskip=0pt\finalhyphendemerits=0\endgraf}\fi}
\newenvironment{proof}{\noindent \textbf{{Proof.~} }}{\qed}
\def\Dbar{\leavevmode\lower.6ex\hbox to 0pt
	{\hskip-.23ex\accent"16\hss}D}
\makeatletter
\def\url@leostyle{%
	\@ifundefined{selectfont}{\def\UrlFont{\sf}}{\def\UrlFont{\small\ttfamily}}}
\makeatother
\def\bcj{\begin{conjecture}}
	\def\ecj{\end{conjecture}}
\def\bcr{\begin{corollary}}
	\def\ecr{\end{corollary}}
\def\bd{\begin{definition}}
	\def\ed{\end{definition}}
\def\bea{\begin{eqnarray}}
	\def\eea{\end{eqnarray}}
\def\bem{\begin{enumerate}}
	\def\eem{\end{enumerate}}
\def\bex{\begin{example}}
	\def\eex{\end{example}}
\def\bim{\begin{itemize}}
	\def\eim{\end{itemize}}
\def\bl{\begin{lemma}}
	\def\el{\end{lemma}}
\def\bma{\begin{bmatrix}}
	\def\ema{\end{bmatrix}}
\def\bpf{\begin{proof}}
	\def\epf{\end{proof}}
\def\bpp{\begin{proposition}}
	\def\epp{\end{proposition}}
\def\bqu{\begin{question}}
	\def\equ{\end{question}}
\def\br{\begin{remark}}
	\def\er{\end{remark}}
\def\bt{\begin{theorem}}
	\def\et{\end{theorem}}
\def\bmm{\begin{memo}}
	\def\emm{\end{memo}}

\def\btb{\begin{tabular}}
	\def\etb{\end{tabular}}

	\newcommand{\nc}{\newcommand}
	
	\def\a{\alpha}
	\def\b{\beta}

	\def\l{\lambda}

	\def\s{\sigma}

	\def\G{\Gamma}

	\nc{\bbA}{\mathbb{A}} \nc{\bbB}{\mathbb{B}} \nc{\bbC}{\mathbb{C}}
	\nc{\bbD}{\mathbb{D}} \nc{\bbE}{\mathbb{E}} \nc{\bbF}{\mathbb{F}}
	\nc{\bbG}{\mathbb{G}} \nc{\bbH}{\mathbb{H}} \nc{\bbI}{\mathbb{I}}
	\nc{\bbJ}{\mathbb{J}} \nc{\bbK}{\mathbb{K}} \nc{\bbL}{\mathbb{L}}
	\nc{\bbM}{\mathbb{M}} \nc{\bbN}{\mathbb{N}} \nc{\bbO}{\mathbb{O}}
	\nc{\bbP}{\mathbb{P}} \nc{\bbQ}{\mathbb{Q}} \nc{\bbR}{\mathbb{R}}
	\nc{\bbS}{\mathbb{S}} \nc{\bbT}{\mathbb{T}} \nc{\bbU}{\mathbb{U}}
	\nc{\bbV}{\mathbb{V}} \nc{\bbW}{\mathbb{W}} \nc{\bbX}{\mathbb{X}}
	\nc{\bbZ}{\mathbb{Z}}
	
	\nc{\bA}{{\bf A}} \nc{\bB}{{\bf B}} \nc{\bC}{{\bf C}}
	\nc{\bD}{{\bf D}} \nc{\bE}{{\bf E}} \nc{\bF}{{\bf F}}
	\nc{\bG}{{\bf G}} \nc{\bH}{{\bf H}} \nc{\bI}{{\bf I}}
	\nc{\bJ}{{\bf J}} \nc{\bK}{{\bf K}} \nc{\bL}{{\bf L}}
	\nc{\bM}{{\bf M}} \nc{\bN}{{\bf N}} \nc{\bO}{{\bf O}}
	\nc{\bP}{{\bf P}} \nc{\bQ}{{\bf Q}} \nc{\bR}{{\bf R}}
	\nc{\bS}{{\bf S}} \nc{\bT}{{\bf T}} \nc{\bU}{{\bf U}}
	\nc{\bV}{{\bf V}} \nc{\bW}{{\bf W}} \nc{\bX}{{\bf X}}
	\nc{\bZ}{{\bf Z}}
	
	\nc{\as}{{\cal AS}}
	\nc{\app}{{\cal AP}}
	\nc{\ar}{{\cal AR}}
	\nc{\bp}{{\cal BP}}
	\nc{\dbp}{{\cal DBP}}
\nc{\ew}{{\cal EW}}
\nc{\dew}{{\cal DEW}}
\nc{\ndew}{{\cal NDEW}}
\nc{\conv}{{\text{Conv}}}

	\nc{\cA}{{\cal A}} \nc{\cB}{{\cal B}} \nc{\cC}{{\cal C}}
	\nc{\cD}{{\cal D}} \nc{\cE}{{\cal E}} \nc{\cF}{{\cal F}}
	\nc{\cG}{{\cal G}} \nc{\cH}{{\cal H}} \nc{\cI}{{\cal I}}
	\nc{\cJ}{{\cal J}} \nc{\cK}{{\cal K}} \nc{\cL}{{\cal L}}
	\nc{\cM}{{\cal M}} \nc{\cN}{{\cal N}} \nc{\cO}{{\cal O}}
	\nc{\cP}{{\cal P}} \nc{\cQ}{{\cal Q}} \nc{\cR}{{\cal R}}
	\nc{\cS}{{\cal S}} \nc{\cT}{{\cal T}} \nc{\cU}{{\cal U}}
	\nc{\cV}{{\cal V}} \nc{\cW}{{\cal W}} \nc{\cX}{{\cal X}}
	\nc{\cZ}{{\cal Z}}
	
	\nc{\cpp}{{\cal PP}}
	
	\nc{\hA}{{\hat{A}}} \nc{\hB}{{\hat{B}}} \nc{\hC}{{\hat{C}}}
	\nc{\hD}{{\hat{D}}} \nc{\hE}{{\hat{E}}} \nc{\hF}{{\hat{F}}}
	\nc{\hG}{{\hat{G}}} \nc{\hH}{{\hat{H}}} \nc{\hI}{{\hat{I}}}
	\nc{\hJ}{{\hat{J}}} \nc{\hK}{{\hat{K}}} \nc{\hL}{{\hat{L}}}
	\nc{\hM}{{\hat{M}}} \nc{\hN}{{\hat{N}}} \nc{\hO}{{\hat{O}}}
	\nc{\hP}{{\hat{P}}} \nc{\hR}{{\hat{R}}} \nc{\hS}{{\hat{S}}}
	\nc{\hT}{{\hat{T}}} \nc{\hU}{{\hat{U}}} \nc{\hV}{{\hat{V}}}
	\nc{\hW}{{\hat{W}}} \nc{\hX}{{\hat{X}}} \nc{\hZ}{{\hat{Z}}}
	
	\nc{\hn}{{\hat{n}}}
	
	\def\dim{\mathop{\rm Dim}}

	\def\max{\mathop{\rm max}}

	\def\rank{\mathop{\rm rank}}
	
	\newcommand{\bra}[1]{\langle#1|}
	\newcommand{\ket}[1]{|#1\rangle}

	\newcommand{\norm}[1]{\lVert#1\rVert}

	\def\Dbar{\leavevmode\lower.6ex\hbox to 0pt
		{\hskip-.23ex\accent"16\hss}D}

\begin{document}

\providecommand{\Tr}{\operatorname{Tr}}
\providecommand{\rank}{\operatorname{rank}}
\providecommand{\Sym}{\operatorname{Sym}}
\providecommand{\Ran}{\operatorname{Ran}}
\providecommand{\ip}[2]{\langle #1|#2\rangle}

\title{A partial-trace matrix inequality and Werner-state distillability}

\author{Zhiwei Song}\email[]{zhiweisong@cuhk.edu.cn}
\affiliation{School of Data Science, The Chinese University of Hong Kong, Shenzhen,
Guangdong, 518172, China}

\author{Lin Chen}\email[]{linchen@buaa.edu.cn}
\affiliation{LMIB(Beihang University), Ministry of education, and School of Mathematical Sciences, Beihang University, Beijing 100191, China}

\begin{abstract}
Motivated by the equivalent partial-trace formulations of Werner-state distillability 
[P.~Costa Rico, Lett. Math. Phys. \textbf{115}, 47 (2025); 
S.-Y.~Qi \textit{et al.}, Phys. Rev. A \textbf{110}, 012406 (2024)], 
we prove a bipartite partial-trace inequality for every matrix of rank at most two. As applications, we prove  the two-copy undistillability of NPT Werner states in arbitrary local dimension, thereby resolving this open problem highlighted in 
[P.~Horodecki \textit{et al.}, PRX Quantum \textbf{3}, 010101 (2022)]. 
We further prove a two-parameter extension of the matrix inequality and show that two individually one-copy-undistillable NPT Werner states cannot activate each other's one-copy distillability. We also resolve the singular-value maximization problem associated with the two-ququart case.
\end{abstract}

\maketitle


\section{Introduction}

Pure entanglement plays a key role in various quantum-information tasks outperforming the classical counterparts such as teleportation \cite{teleportation1993} and error correction \cite{PhysRevA.54.3824}. Pure entanglement inevitably evolves into a mixed state due to the noise in a lab. Hence, the task of distilling pure entanglement from mixed states is a fundamental task in quantum information theory. The states are called distillable when the task is feasible. That is, a nonzero amount of pure entanglement can be asymptotically extracted from many copies of mixed states through local operations and classical communication (LOCC) \cite{rains1999,rains2001,br03}. Distillable entanglement is related to many issues of quantum information and physics, such as distinguishability of Bell states \cite{Ghosh2001DistinguishabilityOB}, entanglement negativity \cite{Vidal2002ComputableMO}, thermodynamically quantifying quantum correlations \cite{Oppenheim2002ThermodynamicalAT}, quantum communication through an unmodulated spin chain \cite{Bose2003QuantumCT}, entangling power of passive optical elements \cite{Wolf2003EntanglingPO}, squashed entanglement \cite{Christandl2004Squashed}, the generation of security keys \cite{dw2005}, entanglement in noninertial frames \cite{FuentesSchuller2005AliceFI}, capacity of quantum channels \cite{Devetak2005Capacity}, extremality of Gaussian states \cite{Wolf2006Extremality}, random states in high-dimensional bipartite systems \cite{Hayden2006AspectsOG}, discord in measurements \cite{Streltsov2011LinkingQD}, lattice gauge theories \cite{PhysRevLett.117.131602}, resource theory of coherence \cite{Chitambar2016RelatingTR} and so on. Recent experiments indicate that distillation protocols with limited resources are operationally relevant, especially in photonic and high-dimensional platforms where multiple copies and collective operations are costly \cite{EckerSingleCopy}. In particular, two-qutrit Werner states have been prepared experimentally, and stochastic local filters have been used to reveal hidden useful quantum features \cite{FangWernerExperiment}, providing further motivation for the finite-copy distillability problem.

It has been shown that distillable states have 
non-positive-partial transpose (NPT) 
\cite{Horodecki1996Separability,peres1996}. In contrast, PPT bound entangled states have been constructed in \cite{horodecki1997,PhysRevLett.80.5239}. The long-standing distillability problem asks whether all NPT entangled states are distillable and become available resources. The problem has attracted considerable attentions, for a review see \cite{Horodecki_2009}. The distillability problem has been investigated using semidefinite programming \cite{vd06}. However for any integer $n$, there are $n$-copy undistillable but $(n+1)$-copy distillable entangled states \cite{Watrous2004ManyCopies}. So the distillability problem seriously undermines the idea of computer verification. Recently, the problem has been seen as one of the five fundamental problems in the scope of theoretical quantum information \cite{PRXQuantum.3.010101}. It is widely believed that the answer to the problem might be negative, and the existence of NPT bound entangled states would imply the non-additivity of distillable entanglement and activation of bound entanglement \cite{Shor2000NonadditivityOB,WOS:000306933000001}. It also would manifest the irreversibility in asymptotic manipulations of entanglement \cite{Vidal2001Irreversibility}.

Technically, qubit-qudit NPT states are proven to be distillable \cite{hhh97}. The distillable entanglement can also be verified via the violation of reduction criterion and creation of entanglement witness \cite{hh1999,klc02}. From a mathematical point of view, the rank of a target NPT state heavily determines its distillability \cite{cd11jpa}, and the bipartite NPT states with rank at most four turn out to be 1-distillable \cite{cc08,cd16pra}. 
Further, the distillbility problem has turned out to be closely related to the well-known family of Werner states on $\bbC^d\otimes\bbC^d$
\begin{eqnarray}\label{eq:werner}
 \rho_{\alpha,d}=\frac{I_{d^2}+\alpha F_d}{d^2+\alpha d},
 \qquad -1\le\alpha\le1,
\end{eqnarray}
where $F_d(\ket{x}\otimes\ket{y})=\ket{y}\otimes\ket{x}$ denotes the swap operator for $\ket{x},\ket{y}\in\bbC^d$. The known distillability picture for the parameter $\alpha$ can be summarized as follows. The state $\rho_{\alpha,d}$ is separable, equivalently PPT, exactly when $\alpha\ge -1/d$ \cite{werner89}. It is already one-copy distillable when $\alpha\in [-1,-1/2)$ \cite{dcl00,PhysRevA.61.062312}. Thus, the remaining problem is to decide the $n(\ge2)$-copy distillability of  NPT Werner states for $\a\in[-1/2,-1/d)$. 
This is actually equivalent to the distillability of $n(\ge2)$-copy $\rho_{-1/2,d}$, shown in 2000 \cite{PhysRevA.61.062312}. A special case of this problem was presented in 2007. That is, Ref. \cite{5508622} converted the distillability problem of two-copy $4\times4$ Werner states into the maximization of square sum of two singular values, with some progress made in the past years \cite{QIAN2021139,SIO2025152}. Recently, Ref. \cite[Theorem~1]{2025New} and Ref. \cite[Theorem V.1]{qi2024nonpositive} have respectively reduced the distillability problem to the following multipartite matrix inequality in terms of partial trace.

\begin{theorem}\label{thm:costa-rico}
Let $\mathcal H$ be a finite-dimensional Hilbert space with a tensor decomposition $\mathcal H=\mathcal H_1\otimes\cdots\otimes\mathcal H_n$, where $\dim\mathcal H_i=d_i$. For $\alpha\in\bbR$ and $C\in\mathcal L(\mathcal H)$, define
\begin{eqnarray}
\label{eq:q^(n)}
 q^{(n)}_\alpha(C)
 :=\sum_{J\in\mathcal P(\{1,2,\ldots,n\})}\alpha^{|J|}\norm{\Tr_J C}_F^2,
\end{eqnarray}
where $\mathcal P(\{1,2,\ldots,n\})$ is the power set, $\Tr_J$ denotes the partial trace over the tensor factors indexed by $J$, and $\Tr_\emptyset C=C$. Then for $d\ge2$ and $-1\le\alpha\le1$, the Werner state $\rho_{\alpha,d}$ is $n$-copy distillable if and only if there exists $C\in\mathcal L((\bbC^d)^{\otimes n})$ with $\rank C\le2$ such that $q^{(n)}_\alpha(C)<0$.
\end{theorem}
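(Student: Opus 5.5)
We reduce $n$-copy distillability to a Schmidt-rank-two test, then evaluate that test on the Werner state by twirling, which produces exactly the partial-trace expression $q^{(n)}_\alpha$.

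\textbf{Step 1 (Schmidt-rank-two criterion).} We start from the standard characterization of $n$-copy distillability due to the Horodeckis and DiVincenzo et al. A state $\rho$ on $A\otimes B$ is $n$-copy distillable if and only if there is a pure state $\ket{\psi}\in(\bbC^d)^{\otimes n}_A\otimes(\bbC^d)^{\otimes n}_B$ of Schmidt rank at most two with
\[
\bra{\psi}(\rho^{\otimes n})^{\Gamma}\ket{\psi}<0 .
\]
Here $\Gamma$ denotes partial transposition on $B$. We write $\ket{\psi}=\sum_{ij}C_{ij}\ket{i}\ket{j}$, which identifies $\ket{\psi}$ with a matrix $C\in\mathcal L((\bbC^d)^{\otimes n})$. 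Under this identification the Schmidt rank of $\ket{\psi}$ equals $\rank C$, so the condition $\rank C\le2$ is exactly the Schmidt-rank constraint.

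\textbf{Step 2 (expanding the partially transposed tensor power).} Since $F_d^\Gamma = d\,P_+$, where $P_+=\ket{\Phi}\bra{\Phi}$ and $\ket{\Phi}=\frac1{\sqrt d}\sum_k\ket{kk}$, we get
\[
\rho_{\alpha,d}^\Gamma\propto I+\alpha d P_+ ,
\]
with a positive constant whenever $d^2+\alpha d>0$, which holds for $\alpha\ge -1$. Taking the $n$-fold tensor power and expanding gives
\[
(\rho_{\alpha,d}^{\otimes n})^\Gamma\propto\sum_{J\subseteq\{1,\dots,n\}}\alpha^{|J|}\, d^{|J|}\,P_+^{(J)}\otimes I^{(J^c)} .
\]
Here $P_+^{(J)}$ acts on the copies $A_jB_j$ for $j\in J$.

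\textbf{Step 3 (evaluating each term).} For each subset $J$ we compute $\bra{\psi}d^{|J|}P_+^{(J)}\otimes I\ket{\psi}$. Contracting $\ket{\psi}$ against $\sum_k\bra{kk}$ on copy $j$ means summing over $i_j=j_j$, which is exactly the partial trace of $C$ over factor $j$. Therefore
\[
(d^{|J|}P_+^{(J)}\otimes I)^{1/2}\ket{\psi}
\]
has squared norm $\norm{\Tr_J C}_F^2$, and the normalizations $d^{|J|}$ cancel against the factors $1/\sqrt d$ in $\ket{\Phi}$. Summing over $J$ gives
\[
\bra{\psi}(\rho^{\otimes n})^\Gamma\ket{\psi}=c\, q^{(n)}_\alpha(C)\qquad\text{with } c>0 .
\]
Combining this with Step 1 yields the equivalence claimed in Theorem~\ref{thm:costa-rico}.

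\textbf{Main obstacle.} The main obstacle is Step 1, specifically the justification that Schmidt rank two suffices. We will cite it rather than reprove it. If a proof is wanted, the route is as follows.
\begin{itemize}
\item[(a)] By definition, $n$-copy distillability means $\rho^{\otimes n}$ can be projected by local projectors of rank two onto an entangled two-qubit state.
\item[(b)] Any two-qubit NPT state is distillable, and two-qubit entanglement is detected by a negative partial-transpose expectation on some vector.
\item[(c)] Pulling that vector back through the local rank-two projectors gives a Schmidt-rank-two $\ket{\psi}$ with negative expectation on $(\rho^{\otimes n})^\Gamma$, and conversely.
\end{itemize}
The remaining care in Step 3 is index bookkeeping, ensuring that the transpose on $B$ matches $C$ rather than $C^T$. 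Since the Frobenius norm is invariant under transposition, either convention gives the same value.
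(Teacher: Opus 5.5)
Your proof is correct. The Schmidt-rank-two criterion, $F_d^{\Gamma}=dP_+$, the expansion of $(\rho_{\alpha,d}^{\Gamma})^{\otimes n}$ over subsets $J$, and the identity $\bra{\psi}d^{|J|}P_+^{(J)}\otimes I\ket{\psi}=\norm{\Tr_J C}_F^2$ give $\bra{\psi}(\rho_{\alpha,d}^{\otimes n})^{\Gamma}\ket{\psi}=q^{(n)}_\alpha(C)/(d^2+\alpha d)^n$, which is exactly the claimed equivalence. The paper does not reprove this theorem; it cites it from Costa Rico and Qi et al. Your route is the standard derivation, and it is the same ``direct computation'' the paper carries out for $\rho_{\alpha,d_1}\otimes\rho_{\beta,d_2}$ in the proof of Corollary~\ref{thm:two-parameter}.
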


We use the bipartite Hilbert space $\mathcal H=\mathcal A\otimes\mathcal B$ with $\dim\mathcal A=m$ and $\dim\mathcal B=n$. 
According to Theorem~\ref{thm:costa-rico}, if we can prove that for every $C\in\mathcal L(\mathcal H)$ with $\rank C\le2$, the following identity 
\begin{eqnarray}\label{eq:qalpha}
 q^{(2)}_{-\frac 12}(C)
 =\norm C_F^2
 -\frac 12(\norm{\Tr_{\mathcal A}C}_F^2+\norm{\Tr_{\mathcal B}C}_F^2)
 +\frac 14 |\Tr C|^2
\end{eqnarray}
is always non-negative, then we can prove the two-copy undistillability of the NPT Werner state.

In this paper, we establish the sharp rank-two partial-trace inequality in Theorem~\ref{thm:main}.  The proof converts the quadratic form into an explicit projection decomposition on two copies of the underlying bipartite Hilbert space; see Lemma~\ref{lem:projection}.  The two-copy distillability threshold of Werner states in arbitrary local dimension then follows as a direct application.  We also derive in Corollary~\ref{thm:two-parameter} a two-parameter extension excluding mutual one-copy activation for two independent NPT Werner states.  As another application, Corollary~\ref{cor:d4} resolves the singular-value maximization problem associated with the two-copy $4\times4$ Werner case proposed in Refs.~\cite{5508622,QIAN2021139}.

\section{Preliminaries}
In this section, we introduce some notations and preliminary results. For a finite-dimensional Hilbert space $\mathcal K$, we write $\mathcal L(\mathcal K)$ for the space of linear operators on $\mathcal K$.
All Hilbert-space inner products are linear in the second variable. For a matrix $X$, we denote
$\sigma_1(X)\ge\sigma_2(X)\ge\cdots\ge0$
as its singular values in non-increasing order.  We write $\mathcal M_{m,n}$ for the space of complex $m\times n$ matrices, and simply write $\mathcal M_n$ for $m=n$.  We identify $\mathcal M_{m,n}$ with $\mathcal H$ by
\begin{eqnarray}\label{eq:vectorization}
 X=[X_{ij}]\longmapsto
 \ket{x}:=\sum_{i,j}X_{ij}\ket{i}_{\mathcal A}\otimes\ket{j}_{\mathcal B},
\end{eqnarray}
where $\ket{i}_{\mathcal A}$ and $\ket{j}_{\mathcal B}$ are orthonormal bases of $\mathcal A$ and $\mathcal B$, respectively. Under this identification,
$
 \ip{x}{y}=\Tr(X^*Y)$ holds for $X,Y\in \mathcal M_{m,n}$.
We use lowercase letters for vectors in $\mathcal H$ and the corresponding uppercase letters for their matrix representatives in $\mathcal M_{m,n}$. We denote by $P_{\mathcal D}$ the projection onto a space $\mathcal D$. For a subspace $\mathcal K\subseteq\mathcal H$, let $\{\ket{f_i}\}$ be an orthonormal basis of $\mathcal K$. We denote by $\Sym^2\mathcal K$ the symmetric subspace of $\mathcal K\otimes\mathcal K$, namely 
\begin{eqnarray} \Sym^2\mathcal K :=\operatorname{Span}\{
 \frac{\ket{f_i}\otimes\ket{f_j}+\ket{f_j}\otimes\ket{f_i}}{2}: i\le j
\}.
\end{eqnarray}

On $\mathcal H\otimes\mathcal H$, let $F_{\mathcal A}$ and $F_{\mathcal B}$ exchange the two copies of $\mathcal A$ and $\mathcal B$, respectively, namely
\begin{eqnarray}\label{eq:partial-swaps-action}
 F_{\mathcal A}((\ket a\otimes\ket b)\otimes(\ket{a'}\otimes\ket{b'}))
 &=&(\ket{a'}\otimes\ket b)\otimes(\ket a\otimes\ket{b'}),\nonumber\\
 F_{\mathcal B}((\ket a\otimes\ket b)\otimes(\ket{a'}\otimes\ket{b'}))
 &=&(\ket a\otimes\ket{b'})\otimes(\ket{a'}\otimes\ket b).
\end{eqnarray}
By definition, the two swaps operators $F_{\mathcal A}$ and $F_{\mathcal B}$ commute with each other. Moreover, $F_{\mathcal A}F_{\mathcal B}$ is the full swap of the two copies of $\mathcal H$. 
We set
\begin{eqnarray}
\label{eq:projections}
P_{\mathcal A}^{\pm}&:=\frac{I\pm F_{\mathcal A}}{2},\qquad
P_{\mathcal B}^{\pm}:=\frac{I\pm F_{\mathcal B}}{2},\nonumber\\
P_{++}&:=P_{\mathcal A}^{+}P_{\mathcal B}^{+}
=\frac{I+F_{\mathcal A}+F_{\mathcal B}+F_{\mathcal A}F_{\mathcal B}}{4},
\nonumber\\
P_{--}&:=P_{\mathcal A}^{-}P_{\mathcal B}^{-}
=\frac{I-F_{\mathcal A}-F_{\mathcal B}+F_{\mathcal A}F_{\mathcal B}}{4}.
\end{eqnarray}
We have $P_{\mathcal A}^{\pm}$ and $P_{\mathcal B}^{\pm}$ are mutually commuting projections. Consequently, $P_{++}$ and $P_{--}$ are projections with
\begin{eqnarray}\label{eq:parity-ranges}
 \Ran P_{++}
 &=&\{\ket{v}:F_{\mathcal A}\ket{v}=\ket{v},\ F_{\mathcal B}\ket{v}=\ket{v}\},\nonumber\\
 \Ran P_{--}
 &=&\{\ket{v}:F_{\mathcal A}\ket{v}=-\ket{v},\ F_{\mathcal B}\ket{v}=-\ket{v}\}.
\end{eqnarray}
Hence, with $P_{\Sym^2\mathcal H}:=(I+F_{\mathcal A}F_{\mathcal B})/2$, the full-swap symmetric subspace decomposes as
\begin{eqnarray}\label{eq:full-symmetric-decomposition}
 \Ran P_{\Sym^2\mathcal H}
 =\Ran P_{++}\oplus\Ran P_{--}.
\end{eqnarray}

For the vectors $\ket{a},\ket{b},\ket{a'},\ket{b'}\in\mathcal H$ corresponding to $A,B,A',B'\in \mathcal M_{m,n}$, the following identities hold:
\begin{eqnarray}\label{eq:swap-A}
 \bra{a}\otimes \bra{b}F_{\mathcal A}\ket{a'}\otimes\ket{b'}=\Tr(A^*B'B^*A'),
\end{eqnarray}
\begin{eqnarray}\label{eq:swap-B}
 \bra{a}\otimes \bra{b}F_{\mathcal B}\ket{a'}\otimes\ket{b'}=\Tr(A^*A'B^*B'),
\end{eqnarray}
\begin{eqnarray}\label{eq:swap-full}
 \bra{a}\otimes \bra{b}F_{\mathcal A}F_{\mathcal B}\ket{a'}\otimes\ket{b'}
 =\ip{a}{b'}\ip{b}{a'}.
\end{eqnarray}
Indeed, writing $\ket a=\sum_{i,\alpha}A_{i\alpha}\ket{i}_{\mathcal A}\otimes\ket{\alpha}_{\mathcal B}$, $\ket b=\sum_{j,\beta}B_{j\beta}\ket{j}_{\mathcal A}\otimes\ket{\beta}_{\mathcal B}$, and similarly for $\ket{a'}$ and $\ket{b'}$, we have
\begin{eqnarray}
 \bra{a,b}F_{\mathcal A}\ket{a',b'}
 &=&\sum_{i,\alpha,j,\beta}\overline{A_{i\alpha}}\,\overline{B_{j\beta}}\,A'_{j\alpha}B'_{i\beta}\nonumber\\
 &=&\sum_{\alpha,\beta}(A^*B')_{\alpha\beta}(B^*A')_{\beta\alpha}
 =\Tr(A^*B'B^*A').
\end{eqnarray}
The proof of \eqref{eq:swap-B} is similar, and \eqref{eq:swap-full} follows because $F_{\mathcal A}F_{\mathcal B}\ket{a'}\otimes\ket{b'}=\ket{b'}\otimes\ket{a'}$.

Let $C\in\mathcal L(\mathcal H)$ have rank at most two. There exist orthonormal vectors $\ket{e_1},\ket{e_2}\in\mathcal H$ whose span contains $\Ran C^*$. Let $\ket{x_i}=C\ket{e_i}$. Since $C$ vanishes on $(\Ran C^*)^\perp$, we have
\begin{eqnarray}\label{eq:two-column}
 C=\ket{x_1}\bra{e_1}+\ket{x_2}\bra{e_2}.
\end{eqnarray}
When $C$ has rank one, we may choose $\ket{x_2}=0$. Let
\begin{eqnarray}\label{eq:S-w}
 \mathcal S:=\operatorname{span}\{\ket{e_1},\ket{e_2}\},\qquad
 \ket{w}:=\ket{e_1}\otimes\ket{x_2}-\ket{e_2}\otimes\ket{x_1}.
\end{eqnarray}
The following proposition presents an equivalent form of $q^{(2)}_{-1/2}(C)$ in \eqref{eq:qalpha}, with proof given in Appendix \ref{g}. 

\begin{proposition}\label{prop:three-square}
For any $C\in\mathcal L(\bbC^m\otimes\bbC^n)$ writing in the form (\ref{eq:two-column}),
the quadratic form in \eqref{eq:qalpha} satisfies
\begin{eqnarray}\label{eq:three-square}
 q^{(2)}_{-1/2}(C)
 &=&2\norm{P_{--}(\ket{e_1}\otimes\ket{x_1})}^2
 +\norm{P_{--}(\ket{e_1}\otimes\ket{x_2}+\ket{e_2}\otimes\ket{x_1})}^2\nonumber\\
 &+&2\norm{P_{--}(\ket{e_2}\otimes\ket{x_2})}^2
 +\norm{P_{++}\ket{w}}^2
 -\frac12\norm{P_{\Sym^2\mathcal S}\ket{w}}^2.
\end{eqnarray}
\end{proposition}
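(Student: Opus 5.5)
Both sides of \eqref{eq:three-square} are Hermitian quadratic forms in the pair $(\ket{x_1},\ket{x_2})$ with $\ket{e_1},\ket{e_2}$ fixed. The plan is to express each side as a combination of the scalar invariants $\ip{x_i}{x_j}$, $\Tr(X_i^*X_jX_k^*X_l)$-type quantities and the overlaps $\ip{e_i}{x_j}$, and then match coefficients. The swap identities \eqref{eq:swap-A}--\eqref{eq:swap-full} are the dictionary between the two sides.

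First I will expand the left side. Writing $C=\sum_i\ket{x_i}\bra{e_i}$, we have $\norm{C}_F^2=\sum_{i,j}\ip{x_i}{x_j}\ip{e_j}{e_i}=\sum_i\norm{x_i}^2$. Also $\Tr C=\sum_i\ip{e_i}{x_i}$. For the partial traces I will use $\norm{\Tr_{\mathcal B}C}_F^2=\Tr[(C\otimes C^*)\,\cdot\,]$ written as a matrix element of $F_{\mathcal A}$ on $\mathcal H\otimes\mathcal H$. Concretely, $\norm{\Tr_{\mathcal B}C}_F^2=\Tr\big((C^*\otimes C)F_{\mathcal A}\big)$ up to the appropriate ordering. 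Inserting the two-column form gives $\sum_{i,j}\bra{e_i}\otimes\bra{x_j}F_{\mathcal A}\ket{x_i}\otimes\ket{e_j}$-type terms, and similarly with $F_{\mathcal B}$ for $\Tr_{\mathcal A}$. Hence
\[
q^{(2)}_{-1/2}(C)=\sum_{i,j}\bra{e_i,x_j}\Big(F_{\mathcal A}F_{\mathcal B}-\tfrac12 F_{\mathcal A}-\tfrac12F_{\mathcal B}+\tfrac14 I\Big)\ket{x_i,e_j}.
\]
The exact placement of $I$ and $F_{\mathcal A}F_{\mathcal B}$ follows from \eqref{eq:swap-full}, since $\norm C_F^2$ and $|\Tr C|^2$ are the full-swap and identity contractions respectively. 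I must check carefully which of the two plays which role.

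Next I will rewrite this operator sum using $P_{--}$ and $P_{++}$. Note that $4P_{--}=I-F_{\mathcal A}-F_{\mathcal B}+F_{\mathcal A}F_{\mathcal B}$. The left-side operator is therefore $2P_{--}$ plus a correction of the form $\tfrac34 F_{\mathcal A}F_{\mathcal B}-\tfrac14 I$ (or similar), and this correction is where $P_{++}$, $\ket w$ and $\Sym^2\mathcal S$ enter. The matrix elements are taken between $\ket{e_i}\otimes\ket{x_j}$ and swapped vectors. I will use that $F_{\mathcal A}F_{\mathcal B}$ maps $\ket{x_i}\otimes\ket{e_j}$ to $\ket{e_j}\otimes\ket{x_i}$, so everything becomes a quadratic form in the vectors $\ket{e_i}\otimes\ket{x_j}$. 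I will then regroup by the symmetric combinations $\ket{e_1x_1}$, $\ket{e_1x_2}+\ket{e_2x_1}$, $\ket{e_2x_2}$ and the antisymmetric one $\ket w$. The symmetric part produces the three $P_{--}$ terms with weights $2,1,2$, coming from normalization of $\Sym^2\mathcal S$-type combinations. On $\ket w$ the full swap acts nontrivially, and the identity splits it as $P_{++}+P_{--}+(\text{odd parts})$.

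Finally I will identify the leftover terms. These are the $\tfrac14|\Tr C|^2$ term and the cross terms $\ip{e_i}{x_j}$, which arise because the $\ket{e_i}$ lie in $\mathcal H$ itself. They should assemble into $-\tfrac12\norm{P_{\Sym^2\mathcal S}\ket w}^2$, which is computable explicitly as $\tfrac12\sum|\langle \text{sym}(e_ie_j)|w\rangle|^2$ and involves exactly the overlaps $\ip{e_i}{x_j}$.

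The main obstacle is bookkeeping in this last step: checking that the $P_{++}\ket w$ term together with the $\Sym^2\mathcal S$ correction reproduces precisely the residual $F_{\mathcal A}F_{\mathcal B}$ and $I$ contributions. I would first verify the whole identity by direct expansion in the rank-one case $\ket{x_2}=0$, where $\ket w=-\ket{e_2}\otimes\ket{x_1}$. This fixes the signs and coefficients, and I would then carry out the general expansion.
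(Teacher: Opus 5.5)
Your plan follows the paper's own proof: both sides of \eqref{eq:three-square} are expanded through the swap identities \eqref{eq:swap-A}--\eqref{eq:swap-full} and matched. The paper does its bookkeeping in the invariants $\norm{X_i}_F$, $\norm{E_i^*X_j}_F$, $\norm{X_jE_i^*}_F$ and $G=(\ip{e_i}{x_j})_{i,j}$, and your operator expression for the left-hand side is correct.

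When you carry it out, correct one coefficient and note where $\ket w$ comes from. After subtracting $2P_{--}$, the remainder is $\frac12F_{\mathcal A}F_{\mathcal B}-\frac14I$, not $\frac34F_{\mathcal A}F_{\mathcal B}-\frac14I$. The antisymmetric ($\ket w$) part of your regrouping contributes $-\norm{P_{--}\ket w}^2$. This combines with $\norm{P_{++}\ket w}^2$ into $\norm{P_{\Sym^2\mathcal H}\ket w}^2=\frac12\norm{\ket w}^2+\frac12\bra{w}F_{\mathcal A}F_{\mathcal B}\ket{w}$. After that, only a short computation in $G$ remains.
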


\section{Proof of the main theorem}
In this section, we prove the main inequality and derive its immediate consequences. It suffices to prove that the sum of the last two terms in \eqref{eq:three-square} is nonnegative. We need the following lemma.

\begin{lemma}\label{lem:occupation}
Let $Q$ be a rank-two orthogonal projection on $\mathcal H$. Suppose $\ket{z}\in\Ran P_{--}$ and
\begin{eqnarray}\label{eq:z-coefficient}
 \ket{z}=\sum_{i,j}Z_{ij}\ket{h_i}\otimes\ket{h_j}
\end{eqnarray}
for an orthonormal basis $\ket{h_i}$ of $\mathcal H$. Then
\begin{eqnarray}\label{eq:occupation}
 \ip{z}{(Q\otimes I+I\otimes Q)z}
 \le2(\sigma_1(Z)^2+\sigma_2(Z)^2)
 \le\norm Z_F^2=\norm{\ket{z}}^2.
\end{eqnarray}
\end{lemma}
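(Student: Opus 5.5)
The plan is to derive both inequalities from the symmetry of $\ket z$, using a Ky Fan bound for the first and an overlap bound against product vectors for the second.

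\textbf{Setting up the matrix $Z$.} The first step is to record what $\ket z\in\Ran P_{--}$ means for $Z$. By \eqref{eq:full-symmetric-decomposition}, $\ket z$ lies in $\Ran P_{\Sym^2\mathcal H}$. So it is invariant under the full swap $F_{\mathcal A}F_{\mathcal B}$, which in the basis $\ket{h_i}\otimes\ket{h_j}$ means $Z=Z^T$. In particular $ZZ^*$ and $Z^*Z$ have the same nonzero spectrum $\sigma_k(Z)^2$. Write $\hat Q$ for the matrix of $Q$ in the basis $\{\ket{h_i}\}$. The coefficient matrix of $(Q\otimes I)\ket z$ is $\hat QZ$, and that of $(I\otimes Q)\ket z$ is $Z\hat Q^T$. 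Hence
\begin{eqnarray*}
\ip{z}{(Q\otimes I+I\otimes Q)z}=\Tr(\hat Q ZZ^*)+\Tr(\hat Q^T Z^*Z).
\end{eqnarray*}
Both $\hat Q$ and $\hat Q^T$ are rank-two orthogonal projections. By Ky Fan's maximum principle, each trace is at most $\sigma_1(Z)^2+\sigma_2(Z)^2$, which gives the first inequality in \eqref{eq:occupation}. The final equality $\norm Z_F^2=\norm{\ket z}^2$ is immediate from orthonormality.

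\textbf{Reducing the second inequality to an overlap bound.} For the middle inequality I aim to show the stronger statement $\sigma_1(Z)^2\le\frac14\norm Z_F^2$. This gives $2(\sigma_1^2+\sigma_2^2)\le4\sigma_1^2\le\norm Z_F^2$. The example $\ket z\propto(\ket a\ket{a'}-\ket{a'}\ket a)_{\mathcal A}\otimes(\ket b\ket{b'}-\ket{b'}\ket b)_{\mathcal B}$ produces four equal singular values, so the constant $1/4$ is sharp. We have $\sigma_1(Z)=\max|\ip{u\otimes v}{z}|$ over unit $\ket u,\ket v\in\mathcal H$. Since $\ket z=P_{--}\ket z$, Cauchy--Schwarz gives
\begin{eqnarray*}
|\ip{u\otimes v}{z}|^2=|\bra{u\otimes v}P_{--}\ket z|^2\le\norm{P_{--}(\ket u\otimes\ket v)}^2\,\norm{\ket z}^2.
\end{eqnarray*}
It therefore suffices to prove $\norm{P_{--}(\ket u\otimes\ket v)}^2\le\frac14$. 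By \eqref{eq:projections} and \eqref{eq:swap-A}--\eqref{eq:swap-full} this reads
\begin{eqnarray*}
\norm{P_{--}(\ket u\otimes\ket v)}^2=\frac14\left(1-\norm{U^*V}_F^2-\norm{UV^*}_F^2+|\Tr(U^*V)|^2\right).
\end{eqnarray*}
The required bound is thus $|\Tr(U^*V)|^2\le\norm{U^*V}_F^2+\norm{UV^*}_F^2$ for unit $U,V\in\mathcal M_{m,n}$.

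\textbf{The main obstacle.} The last trace inequality is the step I expect to be hardest. Naive Cauchy--Schwarz loses a factor of the rank, so I would exploit the interplay of the two terms on the right. My first attempt is to fix $\ket u$, put it in Schmidt form $U=\operatorname{diag}(s_k)$ by local unitaries (all three quantities are invariant), and view the inequality as an operator inequality in $\ket v$. That is, I would show that $\ketbra{\bar u}{\bar u}$, suitably transposed, is dominated by $\sigma_{\mathcal A}\otimes I+I\otimes\sigma_{\mathcal B}$ in the relevant slot arrangement, where $\sigma_{\mathcal A},\sigma_{\mathcal B}$ are the marginals of $\ket u$. In that basis the problem separates into the block spanned by $\{\ket{kk}\}$ and the blocks spanned by $\ket{kl},\ket{lk}$ with $k\neq l$. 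The off-diagonal blocks should be handled by $2\times2$ checks of the form $s_k^2+s_l^2\ge\dots$.

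The diagonal block is where the bound is tight, and I need to be careful about which transpose or conjugate of $\ket u$ actually appears. Checking with the direct formula $\Tr(U^*V)=\sum_k s_kV_{kk}$ suggests the diagonal block reduces to $|\sum_k s_kV_{kk}|^2\le\sum_{k,l}s_k^2(|V_{kl}|^2+|V_{lk}|^2)$. This should be provable via $|\sum_k s_kV_{kk}|\le(\sum_k s_k^2)^{1/2}(\sum_k|V_{kk}|^2)^{1/2}$ combined with $\sum_k s_k^2=1$. If this reduction turns out not to hold, the fallback is a direct variational argument. In that case I would maximize $\norm{P_{--}(\ket u\otimes\ket v)}^2$ by Lagrange conditions and show that the maximizers are orthogonal product vectors, where the value is exactly $1/4$.
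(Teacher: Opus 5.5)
Your first inequality is correct and matches the paper's argument: Ky Fan applied to $\Tr(\hat QZZ^*)$ and to $\Tr(\hat Q^TZ^*Z)$.

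The second part is built on a false intermediate claim. The bound $\sigma_1(Z)^2\le\frac14\norm Z_F^2$ fails for $\ket z\in\Ran P_{--}$ as soon as $\min(m,n)\ge3$.

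\begin{itemize}
\item Since $\max\{|\ip{x}{z}|^2:\ket z\in\Ran P_{--},\ \norm{\ket z}=1\}=\norm{P_{--}\ket x}^2$, your Cauchy--Schwarz reduction loses nothing once you optimize over $\ket z$. The claim is therefore equivalent to $\norm{P_{--}(\ket u\otimes\ket v)}^2\le\frac14$, i.e.\ to $|\Tr(U^*V)|^2\le\norm{U^*V}_F^2+\norm{UV^*}_F^2$.
\item Take $U=V=I_d/\sqrt d$, the maximally entangled vector $\ket\Phi$. The left side is $1$, the right side is $2/d$, and $\norm{P_{--}(\ket\Phi\otimes\ket\Phi)}^2=\frac12(1-\frac1d)$.
\item Explicitly, for $\ket z\propto P_{--}(\ket\Phi\otimes\ket\Phi)$ the matrix $Z$ has singular values $d-1$ (once) and $1$ (with multiplicity $d^2-1$). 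Hence $\sigma_1(Z)^2/\norm Z_F^2=(d-1)/(2d)$, which equals $1/3$ at $d=3$ and tends to $1/2$.
\item Your diagonal-block step is where this shows up. Cauchy--Schwarz gives $|\sum_k s_kV_{kk}|^2\le\sum_k|V_{kk}|^2$. The right-hand side, however, only supplies $2\sum_k s_k^2|V_{kk}|^2$ on the diagonal, which is much smaller when the Schmidt coefficients are spread out.
\item The variational fallback fails for the same reason: the maximum is not $1/4$, and it is not attained at orthogonal product vectors.
\item Your trace inequality does hold when $\min(m,n)\le2$, because then $\rank(U^*V)\le2$ and $\rank(UV^*)\le2$. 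That is the regime your sharpness example lives in.
\end{itemize}

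Since $\sigma_1^2$ alone can approach $\frac12\norm Z_F^2$, no argument that bounds $\sigma_1$ and then uses $\sigma_2\le\sigma_1$ can reach $2(\sigma_1^2+\sigma_2^2)\le\norm Z_F^2$. The top two singular values have to be controlled jointly.

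The missing idea is to run your overlap argument against the symmetric two-term truncation from the Takagi factorization, not against a product vector. Because $Z=Z^T$, you can write $\ket z=\sum_p\sigma_p(Z)\ket{u_p}\otimes\ket{u_p}$ and set $\ket{z_2}=\sum_{p\le2}\sigma_p(Z)\ket{u_p}\otimes\ket{u_p}$.
\begin{itemize}
\item First, $\norm{\ket{z_2}}^2=\ip{z_2}{z}\le\norm{P_{--}\ket{z_2}}\,\norm{\ket z}$.
\item Full-swap invariance of $\ket{z_2}$ gives $\norm{P_{--}\ket{z_2}}^2=\frac12\norm{\ket{z_2}}^2-\frac14\ip{z_2}{(F_{\mathcal A}+F_{\mathcal B})z_2}$.
\item The paper shows this inner product is nonnegative. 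It bounds the cross term by $|\Tr((U_1^*U_2)^2)|\le\sqrt{\Tr((U_1^*U_1)^2)\Tr((U_2^*U_2)^2)}$, so the whole expression is bounded below by a perfect square.
\item This yields $\norm{\ket{z_2}}^2\le\frac12\norm{\ket z}^2$, which is exactly the middle inequality.
\end{itemize}
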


\begin{proof}
Since $\ket{z}\in\Ran P_{--}$, we have  
$P_{--}\ket{z}=\ket{z}$ and
$F_{\mathcal A}\ket{z}=F_{\mathcal B}\ket{z}=-\ket{z}$. Thus $\ket{z}$ is invariant under the full swap $F_{\mathcal A}F_{\mathcal B}$. So $
 Z^T=Z
$. It follows from \eqref{eq:z-coefficient} $\ip{z}{(Q\otimes I)z}=\Tr(QZZ^*)$ and $\ip{z}{(I\otimes Q)z}=\Tr(Q(Z^*Z)^T)$. Consequently, Ky Fan's maximum principle implies that
\begin{eqnarray}
	\label{eq:ky-fan}
 \ip{z}{(Q\otimes I+I\otimes Q)z}=2\Tr(QZZ^*)
 \le2(\sigma_1(Z)^2+\sigma_2(Z)^2).
 \end{eqnarray}
Using the results of \cite[Corollary 4.4.4(c)]{HornJohnson2013}(see also \cite{Takagi1924}), there are orthonormal vectors $\ket{u_p}\in\mathcal H$ such that
\begin{eqnarray}\label{eq:takagi-z}
 \ket{z}=\sum_p\sigma_p(Z)\ket{u_p}\otimes\ket{u_p}.
\end{eqnarray}
We set
$
 \ket{z_2}:=\sum_{p=1}^2\sigma_p(Z)\ket{u_p}\otimes\ket{u_p}.
$
Then
\begin{eqnarray}\label{eq:two-singular-values1}
 \sigma_1(Z)^2+\sigma_2(Z)^2&=&\norm{\ket{z_2}}^2=\ip{z_2}{z}
 =\ip{P_{--}z_2}{z}
 \le\norm{P_{--}\ket{z_2}}\norm{\ket{z}}\nonumber\\
 &=&\sqrt{\bigg(\frac12\norm{\ket{z_2}}^2
 -\frac14\ip{z_2}{(F_{\mathcal A}+F_{\mathcal B})z_2}\bigg)}\norm{\ket{z}}
 \end{eqnarray}
Let $U_p$ be the matrix representative of $\ket{u_p}$ under \eqref{eq:vectorization}. Recalling from \eqref{eq:swap-A} and \eqref{eq:swap-B}, we obtain that
\begin{eqnarray}\label{eq:z2-swap}
 \frac12\ip{z_2}{(F_{\mathcal A}+F_{\mathcal B})z_2}
 =\sum_{p=1}^2\sigma_p(Z)^2\Tr((U_p^*U_p)^2)
 +2\sigma_1(Z)\sigma_2(Z)\Re\Tr((U_1^*U_2)^2).
\end{eqnarray}
Using $|\Tr(R^2)|\le\Tr(R^*R)$ with $R=U_1^*U_2$, and applying the Cauchy--Schwarz inequality, we have
\begin{eqnarray}\label{eq:cross-bound}
 |\Tr((U_1^*U_2)^2)|
 \le\Tr(U_1U_1^*U_2U_2^*)
 \le\sqrt{\Tr((U_1^*U_1)^2)\Tr((U_2^*U_2)^2)}.
\end{eqnarray}
Combining \eqref{eq:z2-swap} and \eqref{eq:cross-bound}, we obtain
\begin{eqnarray}\label{eq:swap-positive}
 \frac12\ip{z_2}{(F_{\mathcal A}+F_{\mathcal B})z_2}
 \ge\biggl(\sigma_1(Z)\sqrt{\Tr((U_1^*U_1)^2)}-\sigma_2(Z)\sqrt{\Tr((U_2^*U_2)^2)}\biggl)^2
 \ge 0.
\end{eqnarray}
Taking back to \eqref{eq:two-singular-values1}, we have
$\norm{\ket{z_2}}^2\le \frac1{\sqrt2}\norm{\ket{z_2}}\norm{\ket{z}}\le \frac12\norm{\ket{z}}^2=\frac12\norm Z_F^2$. Combining with \eqref{eq:ky-fan}, 
this completes the proof.
\end{proof}

\begin{lemma}\label{lem:projection}
Let $\mathcal R\subset\mathcal H$ be a two-dimensional subspace. For every $\ket{\xi}\in\mathcal R\otimes\mathcal H$,
\begin{eqnarray}\label{eq:projection-estimate}
 \norm{P_{\Sym^2\mathcal R}\ket{\xi}}^2
 \le2\norm{P_{++}\ket{\xi}}^2.
\end{eqnarray}
\end{lemma}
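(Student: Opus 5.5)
The plan is to route everything through the decomposition \eqref{eq:full-symmetric-decomposition} and Lemma~\ref{lem:occupation}, applied with $Q:=P_{\mathcal R}$, which is a rank-two projection. Set $\ket{\eta}:=P_{\Sym^2\mathcal R}\ket{\xi}$ and $\ket{z}:=P_{--}\ket{\eta}$. Since $\Sym^2\mathcal R\subseteq\Ran P_{\Sym^2\mathcal H}=\Ran P_{++}\oplus\Ran P_{--}$, we have
\[
\norm{\ket\eta}^2=\norm{P_{++}\ket\eta}^2+\norm{\ket z}^2 .
\]
Moreover $\ket\eta\in\mathcal R\otimes\mathcal R$, so $(Q\otimes I)\ket\eta=(I\otimes Q)\ket\eta=\ket\eta$.

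\emph{Step 1: the $P_{--}$ part of $\ket\eta$ is small.} Because $P_{--}$ is self-adjoint,
\[
\norm{\ket z}^2=\ip{z}{\eta}=\ip{z}{(Q\otimes I)\eta}=\ip{(Q\otimes I)z}{\eta}\le\norm{(Q\otimes I)\ket z}\,\norm{\ket\eta}.
\]
Since $\ket z\in\Ran P_{--}$ is invariant under the full swap $F_{\mathcal A}F_{\mathcal B}$, we get $\ip{z}{(Q\otimes I)z}=\ip{z}{(I\otimes Q)z}$. Lemma~\ref{lem:occupation} then yields $\norm{(Q\otimes I)\ket z}^2\le\tfrac12\norm{\ket z}^2$. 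Hence $\norm{\ket z}^2\le\tfrac12\norm{\ket\eta}^2$, and therefore
\[
\norm{P_{++}\ket\eta}^2\ge\tfrac12\norm{\ket\eta}^2 .
\]

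\emph{Step 2: transfer from $\ket\eta$ to $\ket\xi$.} This is the main obstacle, because $P_{++}\ket\xi$ is not $P_{++}\ket\eta$ in general. The component of $\ket\xi$ orthogonal to $\Sym^2\mathcal R$ inside $\mathcal R\otimes\mathcal H$ can still have $P_{++}$-overlap. What I would try first is the variational bound
\[
\norm{P_{++}\ket\xi}\ge\frac{|\ip{v}{\xi}|}{\norm{\ket v}},\qquad \ket v\in\Ran P_{++},
\]
with $\ket v=P_{++}\ket\eta$. This gives $\ip{v}{\xi}=\norm{\ket\eta}^2-\ip{z}{\xi}$. One must then control $\ip{z}{\xi}=\ip{(Q\otimes I)z}{\xi}$. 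For this I would split $\ket\xi=\ket\eta+\ket{\xi'}$ and show that $(Q\otimes I)\ket z$ has no overlap with the part of $\ket{\xi'}$ living in $\mathcal R\otimes\mathcal R$. That part is antisymmetric under the full swap, and $(Q\otimes Q)\ket z$ is symmetric. The remaining part lies in $\mathcal R\otimes\mathcal R^\perp$, and the overlap there would have to be absorbed by additionally testing against $P_{++}$ of that remaining part.

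If this direct route does not close, the fallback is to optimize over $\ket v$ in the span of $P_{++}\ket\eta$ and $P_{++}\ket{\xi'}$. This is a $2\times2$ Gram-matrix computation. In it one uses the Step~1 estimate $\norm{\ket z}^2\le\tfrac12\norm{\ket\eta}^2$ together with $\norm{(Q\otimes I)\ket z}^2\le\tfrac12\norm{\ket z}^2$ to show that the worst case gives exactly the constant $2$ in \eqref{eq:projection-estimate}.
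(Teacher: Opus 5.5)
\textbf{There is a genuine gap: Step~2 is left as a plan, and neither route you sketch closes it.}

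Your Step~1 is correct. It is the paper's argument in the special case where the off-diagonal part $\ket u$ defined below vanishes.

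The fallback is circular. Since $P_{++}\ket\xi=P_{++}\ket\eta+P_{++}\ket{\xi'}$ lies in the span you optimize over, the optimum of $|\ip{v}{\xi}|/\norm{\ket v}$ there is just $\norm{P_{++}\ket\xi}$. The Gram computation therefore only restates the goal.

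Carrying it out shows what is actually needed. Put $\ket u:=P_{\Sym^2\mathcal H}\ket{\xi'}\in(\mathcal R\otimes\mathcal R^\perp)\oplus(\mathcal R^\perp\otimes\mathcal R)$; the antisymmetric part of $\ket{\xi'}$ is annihilated by $P_{++}$ and is orthogonal to $\ket z$. Then $\norm{P_{++}\ket\xi}^2=\norm{\ket\eta}^2+\norm{\ket u}^2-\norm{\ket z+P_{--}\ket u}^2$, so you must prove
\[
\norm{\ket z+P_{--}\ket u}^2\le\tfrac12\norm{\ket\eta}^2+\norm{\ket u}^2 .
\]
Here $\ket u$ ranges over a complex subspace independently of $\ket\eta$. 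Optimizing over its scale shows the inequality is equivalent to
\[
|\ip{z}{u}|^2\le\bigl(\tfrac12\norm{\ket\eta}^2-\norm{\ket z}^2\bigr)\norm{P_{++}\ket u}^2 .
\]
So the cross term must be dominated by the slack of Step~1 times $\norm{P_{++}\ket u}$. This is a joint statement about $P_{--}\ket\eta$ and $\ket u$. For instance, it forces $\ip{z}{u}=0$ whenever Step~1 is tight, or whenever $\ket u\in\Ran P_{--}$ (which can occur). Your two estimates concern $\ket z$ alone and cannot deliver it. The cross term is nonzero in general, because $P_{--}$ mixes $\mathcal R\otimes\mathcal R$ with the off-diagonal blocks.

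\textbf{The missing idea} is to apply Lemma~\ref{lem:occupation} to the whole $P_{--}$-component of $\ket{\xi_0}:=P_{\Sym^2\mathcal H}\ket\xi=\ket\eta+\ket u$, rather than only to $P_{--}\ket\eta$.

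Because $P_{++}\ket\xi=P_{++}\ket{\xi_0}$, your transfer problem disappears: $\norm{P_{++}\ket\xi}^2=\norm{\ket{\xi_0}}^2-\norm{P_{--}\ket{\xi_0}}^2$.

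With $N:=Q\otimes I+I\otimes Q$ one has $N\ket\eta=2\ket\eta$ and $N\ket u=\ket u$, hence $\ket{\xi_0}=N(\tfrac12\ket\eta+\ket u)$. Cauchy--Schwarz for the positive semidefinite form $N$ then gives
\[
\norm{P_{--}\ket{\xi_0}}^2=\ip{P_{--}\xi_0}{N(\tfrac12\eta+u)}\le\ip{P_{--}\xi_0}{NP_{--}\xi_0}^{1/2}\bigl(\tfrac12\norm{\ket\eta}^2+\norm{\ket u}^2\bigr)^{1/2}.
\]
Lemma~\ref{lem:occupation} bounds the first factor by $\norm{P_{--}\ket{\xi_0}}$. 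Hence $\norm{P_{--}\ket{\xi_0}}^2\le\tfrac12\norm{\ket\eta}^2+\norm{\ket u}^2$, and therefore $\norm{P_{++}\ket\xi}^2\ge\tfrac12\norm{\ket\eta}^2$.

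The weights $\tfrac12$ on the diagonal block and $1$ on the off-diagonal blocks are what make $\norm{\ket u}^2$ cancel. Applying the occupation bound to $P_{--}\ket\eta+P_{--}\ket u$ jointly is what controls the cross term.
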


\begin{proof}
Let $\ket{\xi_0}=P_{\Sym^2\mathcal H}\ket{\xi}$ and $\ket{y}=P_{\Sym^2\mathcal R}\ket{\xi}$. 
Since $\ket{\xi}\in\mathcal R\otimes\mathcal H$, the symmetric vector $\ket{\xi_0}$ has no component in $\Sym^2(\mathcal R^\perp)$. Hence it has an orthogonal decomposition
\begin{eqnarray}\label{eq:w0-decomposition}
 \ket{\xi_0}=\ket{y}+\ket{u},
\end{eqnarray}
where $\ket{u}\in(\mathcal R\otimes\mathcal R^\perp)\oplus(\mathcal R^\perp\otimes\mathcal R)$.
On the other hand, by \eqref{eq:full-symmetric-decomposition}, we have $\ket{\xi_0}=P_{++}\ket{\xi_0}+P_{--}\ket{\xi_0}$. Since $\Ran P_{++}\subseteq\Ran P_{\Sym^2\mathcal H}$, we also have $P_{++}\ket{\xi_0}=P_{++}\ket{\xi}$. Therefore, 
\begin{eqnarray}
	\label{eq:projection-estimate-1}
 \norm{P_{++}\ket{\xi}}^2
 =\norm{\ket{\xi_0}}^2-\norm{P_{--}\ket{\xi_0}}^2.
\end{eqnarray}

Define $N:=P_{\mathcal R}\otimes I+I\otimes P_{\mathcal R}$. One can verify that $N\ket{y}=2\ket{y}$ and $N\ket{u}=\ket{u}$.
Therefore, 
\begin{eqnarray}\label{eq:Pmm-w0}
 \norm{P_{--}\ket{\xi_0}}^2
 &=&\ip{P_{--}\xi_0}{\xi_0}
 =\ip{P_{--}\xi_0}{N(\frac12y+u)}\nonumber\\
 &\le&\sqrt{\ip{P_{--}\xi_0}{NP_{--}\xi_0}\ip{\frac12y+u}{N(\frac12y+u)}}\nonumber\\
 &\le&\norm{\ket{P_{--}\xi_0}}\sqrt{\frac12\norm{\ket{y}}^2+\norm{\ket{u}}^2},
 \end{eqnarray}
where the first inequality follows from the Cauchy--Schwarz inequality, and the second follows from Lemma~\ref{lem:occupation}. 
This yields
$
 \norm{P_{--}\ket{\xi_0}}^2
 \le\frac12\norm{\ket{y}}^2+\norm{\ket{u}}^2.
$
Combining with (\ref{eq:w0-decomposition}) and (\ref{eq:projection-estimate-1}), we obtain $\norm{P_{++}\ket{\xi}}^2 \ge\frac12\norm{\ket{y}}^2$.
This completes the proof.
\end{proof}

Based on the preceding lemmas, we now prove the main theorem.
\begin{theorem}\label{thm:main}
If $C\in\mathcal L(\bbC^m\otimes\bbC^n)$ and $\rank C\le2$, then $q^{(2)}_{-1/2}(C)$ in \eqref{eq:qalpha} is nonnegative, equivalently,
\begin{eqnarray}\label{eq:main}
 \norm{\Tr_{\mathcal A} C}_F^2+\norm{\Tr_{\mathcal B} C}_F^2
 \le 2\norm C_F^2+\frac12|\Tr C|^2.
\end{eqnarray}
Consequently, for every $d\ge2$ and $-1\le\alpha\le1$, the Werner state $\rho_{\alpha,d}$ defined in \eqref{eq:werner} is two-copy undistillable if and only if $\alpha\ge-1/2$. In particular, for $d\ge3$, every Werner state with $-1/2\le\alpha<-1/d$ is NPT and two-copy undistillable.
\end{theorem}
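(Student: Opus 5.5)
The inequality \eqref{eq:main} follows from Proposition~\ref{prop:three-square} combined with Lemma~\ref{lem:projection}. Write $C$ in the form \eqref{eq:two-column} and expand $q^{(2)}_{-1/2}(C)$ as in \eqref{eq:three-square}. The first three terms are squared norms and hence nonnegative. It therefore suffices to show
\[
\norm{P_{\Sym^2\mathcal S}\ket{w}}^2\le 2\norm{P_{++}\ket{w}}^2 .
\]
Since $\ket{w}=\ket{e_1}\otimes\ket{x_2}-\ket{e_2}\otimes\ket{x_1}$ lies in $\mathcal S\otimes\mathcal H$, and $\mathcal S$ is two-dimensional, this is exactly Lemma~\ref{lem:projection} with $\mathcal R=\mathcal S$ and $\ket{\xi}=\ket{w}$.

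A small point arises when $C$ has rank one or zero. One must still choose orthonormal $\ket{e_1},\ket{e_2}$ so that $\mathcal S$ is genuinely two-dimensional; this is possible whenever $mn\ge2$, and the case $mn=1$ is trivial. Thus $q^{(2)}_{-1/2}(C)\ge0$. Expanding the definition \eqref{eq:qalpha} and multiplying by $2$ gives \eqref{eq:main}.

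For the Werner-state consequence, apply Theorem~\ref{thm:costa-rico} with $n=2$. At $\alpha=-1/2$ the inequality just proved says that $q^{(2)}_{-1/2}(C)\ge0$ for every $C$ of rank at most two, so $\rho_{-1/2,d}$ is two-copy undistillable.

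For $\alpha\ge-1/2$, first note that $q^{(2)}_\alpha(C)$ is affine in $\alpha^{|J|}$. A direct approach is not obvious, so I would instead use monotonicity of distillability along the Werner family. As recalled in the introduction, $\rho_{\alpha,d}$ for $\alpha\in[-1/2,-1/d)$ is obtained from $\rho_{-1/2,d}$ by mixing with the separable state $\rho_{0,d}$ (or via the twirling argument of \cite{PhysRevA.61.062312}). Undistillability at $\alpha=-1/2$ therefore propagates to all $\alpha\in[-1/2,-1/d)$. For $\alpha\ge-1/d$ the state is PPT, hence undistillable.

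Conversely, for $\alpha<-1/2$ the state is already one-copy distillable \cite{dcl00,PhysRevA.61.062312}, hence two-copy distillable. The NPT claim for $d\ge3$ and $-1/2\le\alpha<-1/d$ is Werner's criterion \cite{werner89}.

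The main obstacle is the monotonicity step in $\alpha$. If the mixing argument needs justification within the present framework, I would instead argue directly from \eqref{eq:q^(n)}. Writing $a=\norm C_F^2$, $b=\norm{\Tr_{\mathcal A}C}_F^2+\norm{\Tr_{\mathcal B}C}_F^2$ and $c=|\Tr C|^2$, we have $q_\alpha=a+\alpha b+\alpha^2c$. The standard bounds $b\le 2\cdot\max(m,n)\,a$ together with Theorem~\ref{thm:costa-rico}'s known threshold at $-1/d$ are one route. The cleaner route is to cite the equivalence, stated in the introduction, between the two-copy distillability of $\rho_{\alpha,d}$ for $\alpha\in[-1/2,-1/d)$ and that of $\rho_{-1/2,d}$.
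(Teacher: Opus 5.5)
Your proposal is correct and follows the paper's proof. The inequality is exactly Proposition~\ref{prop:three-square} combined with Lemma~\ref{lem:projection} for $\mathcal R=\mathcal S$ and $\ket{\xi}=\ket{w}$. The Werner-state consequence is obtained, as in the paper, from the known reduction of the range $-1/2\le\alpha<-1/d$ to the endpoint $\alpha=-1/2$; your mixing argument also works, because pairing with a separable factor contracts the Schmidt-rank-two vector against product vectors and so keeps Schmidt rank at most two. Your tentative route via $b\le2\max(m,n)a$ is unnecessary: the interpolation identity in Corollary~\ref{thm:two-parameter}, with both parameters equal to $\alpha$, gives $q^{(2)}_\alpha(C)\ge0$ for all $\alpha\in[-1/2,0]$, and for $\alpha\ge0$ every term of $q^{(2)}_\alpha(C)$ is already nonnegative.
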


\begin{proof}
Consider the form of $C$ in \eqref{eq:three-square},
by applying Lemma~\ref{lem:projection} to $\ket{w}$, the right-hand side is nonnegative. This proves $q^{(2)}_{-1/2}(C)\ge0$, which is equivalent to \eqref{eq:main}.

The consequent two-copy undistillability of Werner states follows from the established equivalences explained in the Introduction. This completes the proof.
\end{proof}

As an application, the inequality also proves the conjectured matrix inequality arising from the two-copy $4\times4$ Werner problem in \cite{5508622} and further studied in \cite{QIAN2021139,SIO2025152}.
\begin{corollary}
\label{cor:d4}
Let $d\ge4$ and $\s_j(X)$ the $j$'th singular value of $X$. If $A,B\in \mathcal{M}_d$ are traceless, then
\begin{eqnarray}\label{eq:d4}
 \sigma_1(A\otimes I_d+I_d\otimes B)^2
 +\sigma_2(A\otimes I_d+I_d\otimes B)^2
 \le\frac{3d-4}{d}(\norm{A}_F^2+\norm{B}_F^2).
\end{eqnarray}
\end{corollary}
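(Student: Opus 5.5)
The plan is to rewrite the left-hand side of \eqref{eq:d4} as a maximum over rank-two matrices $C$, and then bound each term of that maximum using Theorem~\ref{thm:main} with $m=n=d$. Set $M:=A\otimes I_d+I_d\otimes B\in\mathcal L(\bbC^d\otimes\bbC^d)$. By the variational characterization of the Ky Fan $2$-norm (von Neumann trace inequality together with the SVD),
\begin{eqnarray*}
 \sqrt{\sigma_1(M)^2+\sigma_2(M)^2}=\max\{\,|\Tr(C^*M)| : \rank C\le2,\ \norm{C}_F=1\,\}.
\end{eqnarray*}
The maximum is attained by the normalized truncated SVD of $M$. It therefore suffices to show that $|\Tr(C^*M)|^2\le\frac{3d-4}{d}(\norm A_F^2+\norm B_F^2)$ for every $C$ with $\rank C\le 2$ and $\norm C_F=1$.

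\textbf{Step 1 (partial traces).} Use the identities $\Tr(C^*(A\otimes I_d))=\Tr((\Tr_{\mathcal B}C)^*A)$ and $\Tr(C^*(I_d\otimes B))=\Tr((\Tr_{\mathcal A}C)^*B)$. Since $A$ and $B$ are traceless, I can replace $\Tr_{\mathcal B}C$ by its traceless part $X:=\Tr_{\mathcal B}C-\frac{\Tr C}{d}I_d$, and $\Tr_{\mathcal A}C$ by $Y:=\Tr_{\mathcal A}C-\frac{\Tr C}{d}I_d$, without changing either inner product. Cauchy--Schwarz on the pair then gives
\begin{eqnarray*}
 |\Tr(C^*M)|^2\le(\norm X_F^2+\norm Y_F^2)(\norm A_F^2+\norm B_F^2).
\end{eqnarray*}

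\textbf{Step 2 (apply the main inequality).} Since $\norm X_F^2=\norm{\Tr_{\mathcal B}C}_F^2-|\Tr C|^2/d$, and likewise for $Y$, Theorem~\ref{thm:main} yields
\begin{eqnarray*}
 \norm X_F^2+\norm Y_F^2\le 2\norm C_F^2+\Bigl(\frac12-\frac2d\Bigr)|\Tr C|^2 .
\end{eqnarray*}

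\textbf{Step 3 (bound the trace).} For $d\ge4$ the coefficient $\frac12-\frac2d$ is nonnegative, so I need an upper bound on $|\Tr C|^2$. Since $\rank C\le2$, we have $|\Tr C|\le\sum_{i\le2}\sigma_i(C)\le\sqrt2\,\norm C_F$, hence $|\Tr C|^2\le2$. This gives $\norm X_F^2+\norm Y_F^2\le 2+1-\frac4d=\frac{3d-4}{d}$, which is the claimed bound.

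The only genuinely delicate point is the sign of $\frac12-\frac2d$, which is exactly where the hypothesis $d\ge4$ enters. The substantive input is Theorem~\ref{thm:main} itself. Everything else is routine: the Ky Fan variational formula, the removal of the identity components using tracelessness, and the rank-two trace bound.
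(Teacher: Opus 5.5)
Your proposal is correct and follows essentially the same route as the paper's proof. Both arguments use the Ky Fan variational formula for $(\sigma_1^2+\sigma_2^2)^{1/2}$, drop the identity components of the partial traces via tracelessness, apply Cauchy--Schwarz and Theorem~\ref{thm:main}, and finish with $|\Tr C|^2\le\rank(C)\norm C_F^2\le 2$, using that $\frac12-\frac2d\ge0$ for $d\ge4$.
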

\begin{proof}
For every matrix $X$, we have
\begin{eqnarray}\label{eq:kyfan-duality}
 (\sigma_1(X)^2+\sigma_2(X)^2)^{1/2}
 =\max_{\rank C\le2,\ \norm C_F=1}|\Tr(C^*X)|.
\end{eqnarray}
Let $X=A\otimes I_d+I_d\otimes B$. Since $A$ and $B$ are traceless,
we have
\begin{eqnarray}\label{eq:traceless-pairing}
 \Tr(C^*X)
 &=&\ip{\Tr_{\mathcal B}C-\frac{\Tr C}{d}I_d}{A}
 +\ip{\Tr_{\mathcal A}C-\frac{\Tr C}{d}I_d}{B}.
\end{eqnarray}
Applying Theorem~\ref{thm:main}, we have
\begin{eqnarray}\label{eq:traceless-partial-traces}
&&\lefteqn{\norm{\Tr_{\mathcal A}C-\frac{\Tr C}{d}I_d}_F^2
+\norm{\Tr_{\mathcal B}C-\frac{\Tr C}{d}I_d}_F^2}
\nonumber\\
&=&\norm{\Tr_{\mathcal A}C}_F^2+\norm{\Tr_{\mathcal B}C}_F^2-\frac{2}{d}|\Tr C|^2\le 2\norm C_F^2+\left(\frac12-\frac2d\right)|\Tr C|^2\le \frac{3d-4}{d}\norm C_F^2,
\end{eqnarray}
where the last inequality uses the fact that $|\Tr C|^2\le \rank(C)\norm C_F^2$, see \cite[Chapter~5]{HornJohnson2013}.
Applying Cauchy--Schwarz inequality to \eqref{eq:traceless-pairing}, and then using \eqref{eq:kyfan-duality} and \eqref{eq:traceless-partial-traces}, we can prove \eqref{eq:d4}. This completes the proof.
\end{proof}

Activation shows that one-copy undistillability may be lost after an auxiliary entangled state is supplied. More precisely, every bipartite NPT state $\rho$ is either one-copy distillable or admits a PPT bound entangled state $\sigma$ such that $\rho\otimes\sigma$ is one-copy distillable \cite{klc02}. In particular, via a reduction to Werner states, such activator states can be chosen to be universal and arbitrarily close to the separable set for each fixed local dimension \cite{VollbrechtWolf2002Activation}. Consequently, if NPT bound entangled states exist, this phenomenon would imply that distillable entanglement is superadditive and nonconvex \cite{Shor2000NonadditivityOB}. Our next corollary provides a contrasting result, showing that two independent NPT Werner states cannot activate each other's one-copy distillability.

\begin{corollary}\label{thm:two-parameter}
Let $C\in\mathcal L(\bbC^m\otimes\bbC^n)$ satisfy $\rank C\le2$. For $\a,\b\in[-1/2,0]$, set
\begin{eqnarray}\label{eq:two-parameter}
 q_{\a,\b}^{(2)}(C)
 :=\norm C_F^2+\a\norm{\Tr_{\mathcal A}C}_F^2
 +\b\norm{\Tr_{\mathcal B}C}_F^2+\a\b|\Tr C|^2.
\end{eqnarray}
Then $q_{\a,\b}^{(2)}(C)\ge0$. Consequently, for $d_1,d_2\ge3$, $-1/2\le\a<-1/d_1$, and $-1/2\le\b<-1/d_2$, the NPT state $\rho_{\a,d_1}\otimes\rho_{\b,d_2}$ is one-copy undistillable.
\end{corollary}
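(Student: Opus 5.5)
The plan is to use the fact that $q^{(2)}_{\a,\b}(C)$ is \emph{affine in each parameter separately}. For fixed $C$, the map $(\a,\b)\mapsto q^{(2)}_{\a,\b}(C)$ is a bilinear polynomial $c_0+c_1\a+c_2\b+c_3\a\b$. A function that is affine in $\a$ for fixed $\b$ attains its minimum over $\a\in[-1/2,0]$ at an endpoint. The same holds in $\b$. Hence the minimum over the square $[-1/2,0]^2$ is attained at one of the four corners, and it suffices to check those.

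The four corners are handled as follows.
\begin{itemize}
\item At $(0,0)$ we get $\norm C_F^2\ge0$, which is trivial.
\item At $(-1/2,-1/2)$ we get exactly $q^{(2)}_{-1/2}(C)\ge0$, which is Theorem~\ref{thm:main}.
\item At $(-1/2,0)$ we need $\norm{\Tr_{\mathcal A}C}_F^2\le2\norm C_F^2$ for $\rank C\le2$.
\item At $(0,-1/2)$ we need the symmetric bound $\norm{\Tr_{\mathcal B}C}_F^2\le2\norm C_F^2$.
\end{itemize}

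For the third corner I would argue by duality, as in the proof of Corollary~\ref{cor:d4}. We have
\[
\norm{\Tr_{\mathcal A}C}_F=\max_{\norm M_F=1}|\Tr(C^*(I_m\otimes M))|.
\]
For $\rank C\le2$, the Ky Fan duality \eqref{eq:kyfan-duality} bounds the right-hand side by $\norm C_F\,(\sigma_1(I\otimes M)^2+\sigma_2(I\otimes M)^2)^{1/2}$. The singular values of $I_m\otimes M$ are those of $M$, each repeated $m$ times. So the sum of the top two squared singular values is at most $2\sigma_1(M)^2\le2\norm M_F^2=2$, which gives the bound. The fourth corner is symmetric.

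For the consequence, I would invoke the analogue of Theorem~\ref{thm:costa-rico} for the state $\rho_{\a,d_1}\otimes\rho_{\b,d_2}$ on $(\bbC^{d_1}\otimes\bbC^{d_2})^{\otimes 2}$. The reasoning of \cite{2025New,qi2024nonpositive} uses the $U\otimes U$ twirl-invariance of each Werner factor together with the Schmidt-rank-two characterization of one-copy distillability. I expect this to carry over verbatim with twirling by $(U_1\otimes U_1)\otimes(U_2\otimes U_2)$. Up to a positive normalization, the relevant quantity is $\bra{\psi}(I+\a F_{d_1})\otimes(I+\b F_{d_2})\ket{\psi}^{T_B}$ for a Schmidt-rank-two $\ket{\psi}$. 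Expanding the product of the two partial-transposed swaps produces exactly the four terms of \eqref{eq:two-parameter}, with $\mathcal A=\bbC^{d_1}$ and $\mathcal B=\bbC^{d_2}$.

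Thus one-copy distillability of $\rho_{\a,d_1}\otimes\rho_{\b,d_2}$ would require some $C$ of rank at most two with $q^{(2)}_{\a,\b}(C)<0$, contradicting the first part. NPT-ness of the product follows because each factor is NPT when $\a<-1/d_1$ and $\b<-1/d_2$.

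The main obstacle is carefully justifying this mixed-parameter version of the partial-trace criterion. The inequality itself reduces cleanly to the corners.
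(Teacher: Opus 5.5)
Your argument is correct, and its overall structure matches the paper's. The paper also reduces to the four corners, through the explicit identity
\[
q^{(2)}_{\alpha,\beta}(C)=(1+2\alpha)(1+2\beta)q^{(2)}_{0,0}(C)-2\alpha(1+2\beta)q^{(2)}_{-1/2,0}(C)-2\beta(1+2\alpha)q^{(2)}_{0,-1/2}(C)+4\alpha\beta\,q^{(2)}_{-1/2,-1/2}(C).
\]
Its weights are nonnegative on $[-1/2,0]^2$, so this is your vertex-minimum argument written out.

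The genuine difference is at the mixed corners. The paper obtains $\norm{\Tr_{\mathcal A}C}_F^2\le2\norm C_F^2$ by applying Theorem~\ref{thm:main} to $C\otimes\ket{0}\bra{1}$ on $\mathcal A\otimes(\mathcal B\otimes\bbC^2)$. There the traceless factor kills both $|\Tr|^2$ and the other partial trace. Your duality argument (the rank-two Ky Fan bound plus the spectrum of $I_m\otimes M$) does not use the main theorem. It shows that these corners are elementary rank bounds. The paper's trick is shorter once Theorem~\ref{thm:main} is available.

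For the consequence, the twirling detour and the ``main obstacle'' you flag are unnecessary. The Schmidt-rank-two criterion for one-copy distillability applies to any bipartite state. So the paper simply computes $\bra{c}(\rho_{\alpha,d_1}\otimes\rho_{\beta,d_2})^{\Gamma}\ket{c}=q^{(2)}_{\alpha,\beta}(C)/((d_1^2+\alpha d_1)(d_2^2+\beta d_2))$ for Schmidt-rank-two $\ket{c}$ across $(\mathcal A_1\mathcal A_2):(\mathcal B_1\mathcal B_2)$. That is exactly the expansion you sketch, with the partial transpose applied to the operator rather than to the ket.
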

\begin{proof}
Let $R=\ket{0}\bra{1}\in\mathcal L(\bbC^2)$. Applying Theorem~\ref{thm:main} to $C\otimes R$, regarded as an operator on $\mathcal A\otimes(\mathcal B\otimes\bbC^2)$, gives
$\norm{\Tr_{\mathcal A}C}_F^2\le2\norm C_F^2$. Similarly, it holds that $\norm{\Tr_{\mathcal B}C}_F^2\le2\norm C_F^2$. 
Hence both $q_{-1/2,0}^{(2)}(C)$ and $q_{0,-1/2}^{(2)}(C)$ are nonnegative, while Theorem~\ref{thm:main} gives $q_{-1/2,-1/2}^{(2)}(C)\ge0$. A direct expansion yields
\begin{eqnarray}\label{eq:two-parameter-interpolation}
 q_{\a,\b}^{(2)}(C)
 &=&(1+2\a)(1+2\b)q_{0,0}^{(2)}(C)
 -2\a(1+2\b)q_{-1/2,0}^{(2)}(C)\nonumber\\
 &-&2\b(1+2\a)q_{0,-1/2}^{(2)}(C)
 +4\a\b q_{-1/2,-1/2}^{(2)}(C)\ge 0.
\end{eqnarray}

For the two Werner states acting on
$\mathcal A_1\otimes\mathcal B_1$ and
$\mathcal A_2\otimes\mathcal B_2$ respectively, we consider the
bipartition $(\mathcal A_1\mathcal A_2):(\mathcal B_1\mathcal B_2)$. Let $\ket c$ be any vector of Schmidt
rank two across this bipartition. Under the vectorization in \eqref{eq:vectorization}, $\ket c$
corresponds to a matrix
$C\in \mathcal L(\mathcal A\otimes\mathcal B)$ with
$\dim\mathcal A=d_1$ and $\dim\mathcal B=d_2$.
A direct computation gives
\begin{eqnarray}
 \bra{c}{(\rho_{\a,d_1}\otimes\rho_{\b,d_2})^\G}\ket{c}
 &=&
 \frac{
  q_{\a,\b}^{(2)}(C)
 }
 {(d_1^2+\a d_1)(d_2^2+\b d_2)}\ge 0.
 \end{eqnarray}
This proves that $\rho_{\a,d_1}\otimes\rho_{\b,d_2}$ is one-copy undistillable. 
\end{proof}

Under the Choi--Jamio{\polishl}kowski isomorphism‌, Corollary~\ref{thm:two-parameter} equivalently states that the tensor product of the reduction-type maps $\Phi_{\a,d_1}(X):=\Tr(X)I_{d_1}+\a X$ and $\Phi_{\b,d_2}(X):=\Tr(X)I_{d_2}+\b X$ are $2$-positive. This is nontrivial because tensor products of $2$-positive maps need not remain $2$-positive in general, and such questions are closely related to NPT bound entanglement \cite{PhysRevA.61.062312,Clarisse2005PositiveMaps,PRXQuantum.3.010101,Stormer2010TensorPowers}.

\section{Outlook: the general $n$-copy problem}
Theorem~\ref{thm:costa-rico} converts the higher-copy distillability problem for Werner states into a family of rank-constrained partial-trace inequalities. As recalled in the Introduction, the unresolved one-copy-undistillable regime reduces to the endpoint $\alpha=-1/2$. Hence, for every $n\ge3$, the $n$-copy undistillability of the endpoint Werner state follows if
\begin{eqnarray}\label{eq:higher-copy-target}
 q^{(n)}_{-\frac12}(C)
 =\sum_{J\subseteq\{1,\ldots,n\}}
 \left(-\frac12\right)^{|J|}
 \norm{\Tr_J C}_F^2\ge0
\end{eqnarray}
holds for every $C\in\mathcal L(\mathcal H_1\otimes\cdots\otimes\mathcal H_n)$ with $\rank C\le2$. We first record that the rank-one case of \eqref{eq:higher-copy-target} is valid for every $n$.
\begin{proposition}\label{lem:rank-one-n-copy}
Let $\mathcal H=\mathcal H_1\otimes\cdots\otimes\mathcal H_n$ be a finite-dimensional Hilbert space. If $C\in\mathcal L(\mathcal H)$ has rank one, then
\begin{eqnarray}\label{eq:rank-one-n-copy}
 q^{(n)}_{-\frac12}(C)\ge 2^{-n}\norm C_F^2>0.
\end{eqnarray}
\end{proposition}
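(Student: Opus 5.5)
The plan is to rewrite $q^{(n)}_{-1/2}(C)$ for rank-one $C$ as the expectation value of a positive product operator on two copies of $\mathcal H$. This generalizes the swap identities \eqref{eq:swap-A}--\eqref{eq:swap-full} from the bipartite case.

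Write $C=\ket{x}\bra{y}$ with $\ket x,\ket y\in\mathcal H$. On $\mathcal H\otimes\mathcal H$, let $F_i$ be the operator exchanging the two copies of the factor $\mathcal H_i$. The $F_i$ are commuting self-adjoint unitaries, and for $J\subseteq\{1,\ldots,n\}$ we set $F_J:=\prod_{i\in J}F_i$.

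The first key step is the identity
\begin{eqnarray}
\norm{\Tr_J(\ket x\bra y)}_F^2=\bra{x}\otimes\bra{y}\,F_J\,\ket{x}\otimes\ket{y}.
\end{eqnarray}
To prove it, regroup $\mathcal H=\mathcal H_{J^c}\otimes\mathcal H_J$ and view $\ket x,\ket y$ as matrices $X,Y$ from $\mathcal H_J^*$ to $\mathcal H_{J^c}$, as in \eqref{eq:vectorization}. Then $\Tr_J(\ket x\bra y)=XY^*$, so
\begin{eqnarray}
\norm{\Tr_J(\ket x\bra y)}_F^2=\Tr(X^*XY^*Y).
\end{eqnarray}
This matches the matrix expression obtained in \eqref{eq:swap-B} for swapping only the $J$-factors. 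As sanity checks, $J=\emptyset$ gives $\norm x^2\norm y^2$, and $J$ equal to the full set gives $|\ip yx|^2$, in agreement with \eqref{eq:swap-full}. The main obstacle is getting the indices and the conjugation convention right in this identity. I would check it directly in a product basis, mimicking the computation that follows \eqref{eq:swap-full}.

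Summing the identity over $J$ with weights $(-1/2)^{|J|}$ and using that the $F_i$ commute, the sum factorizes:
\begin{eqnarray}
q^{(n)}_{-\frac12}(C)=\bra{x}\otimes\bra{y}\,\prod_{i=1}^n\Big(I-\tfrac12F_i\Big)\,\ket{x}\otimes\ket{y}.
\end{eqnarray}
Each factor $I-\tfrac12F_i$ has eigenvalues $\tfrac12$ (on the $F_i$-symmetric part) and $\tfrac32$ (on the $F_i$-antisymmetric part). Since the factors commute, they are simultaneously diagonalizable, so their product is at least $2^{-n}I$. Therefore
\begin{eqnarray}
q^{(n)}_{-\frac12}(C)\ge2^{-n}\norm x^2\norm y^2=2^{-n}\norm C_F^2.
\end{eqnarray}
The last quantity is strictly positive because $C\neq0$ when $\rank C=1$.
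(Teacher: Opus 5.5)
Your proposal is correct and follows essentially the same argument as the paper: write $C=\ket{x}\bra{y}$, use the swap trick $\norm{\Tr_J C}_F^2=\ip{x\otimes y}{F_J(x\otimes y)}$ to express $q^{(n)}_{-1/2}(C)$ as the expectation of the commuting product $\prod_k(I-\frac12F_k)$, and bound that product below by $2^{-n}I$. Your derivation of the swap-trick identity through $\Tr_J(\ket x\bra y)=XY^*$ fills in a step the paper states without proof.
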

\begin{proof}
Let $C=\ket{x}\bra{y}$. Let $F_k$ exchange the two copies of the subsystem $\mathcal H_k$, and, for $J\subseteq\{1,\ldots,n\}$, put $F_J:=\prod_{k\in J}F_k$, with $F_\emptyset=I$. The swap trick gives
\begin{eqnarray}\label{eq:rank-one-swap-trick}
 \norm{\Tr_J C}_F^2
 =\ip{x\otimes y}{F_J(x\otimes y)}.
\end{eqnarray}
Consequently,
\begin{eqnarray}\label{eq:rank-one-product-form}
 q^{(n)}_{-\frac12}(C)
 =\ip{x\otimes y}{\displaystyle\prod_{k=1}^n\left(I-\frac12F_k\right)(x\otimes y)}.
\end{eqnarray}
The local swaps commute and $I-F_k/2\ge I/2$. It follows that
\begin{eqnarray}
 q^{(n)}_{-\frac12}(C)
 \ge 2^{-n}\norm{x}^2\norm{y}^2
 =2^{-n}\norm C_F^2.
\end{eqnarray}
This completes the proof.
\end{proof}

It remains to treat rank-two matrices. The two-copy proof does not extend directly because the simultaneous parity decomposition becomes more complicated. For $n=3$, the even-parity sectors are
\begin{eqnarray}
\label{eq:n3-even-parity}
 (+++),\qquad (+--),\qquad (-+-),\qquad (--+).
\end{eqnarray}
For general $n$, there are $2^{n-1}$ even-parity sectors, producing several coupled obstruction terms rather than the single term controlled by Lemma~\ref{lem:projection}. A proof for arbitrary $n$ should therefore control these sectors collectively in a spectator-stable way.

\vspace{0.8em}
\noindent\textit{Note added.---}After completing this work, we became aware of the very recent independent work of Fu, Gao, and Park~\cite{FuGaoPark2026}, which solves this two-copy Werner-state distillability problem through a variational Schur-complement argument. We believe that the two approaches provide complementary perspectives on this problem.

\section*{DECLARATION ON THE USE OF GENERATIVE AI}
The initial formulation and proof of Proposition~\ref{prop:three-square} were generated with GPT-5.5 Pro. The authors subsequently checked the argument in full, verified the results, and take full responsibility for the correctness of the manuscript.

\section*{ACKNOWLEDGMENTS}
The authors were supported by the NNSF of China (Grant No. 12471427).

\appendix
\section{Proof of Proposition~\ref{prop:three-square}}
\label{g}

For $X,E\in \mathcal M_{m,n}$, with corresponding vectors $\ket{x},\ket{e}\in\mathcal H$, we have
\begin{eqnarray}\label{eq:app-partial-traces}
 \Tr_{\mathcal B}(\ket{x}\bra{e})=XE^*,\qquad
 \Tr_{\mathcal A}(\ket{x}\bra{e})=(E^*X)^T.
\end{eqnarray}
Combining with \eqref{eq:qalpha} and \eqref{eq:two-column}, we obtain that
\begin{eqnarray}\label{eq:app-q-direct}
 q^{(2)}_{-1/2}(C)
 &=&\norm{X_1}_F^2+\norm{X_2}_F^2
 -\frac12\norm{E_1^*X_1+E_2^*X_2}_F^2\nonumber\\
 &-&\frac12\norm{X_1E_1^*+X_2E_2^*}_F^2
 +\frac14|\ip{e_1}{x_1}+\ip{e_2}{x_2}|^2.
\end{eqnarray}
Let
$
 G=[G_{ij}]:=(\ip{e_i}{x_j})_{i,j=1}^2.
$
We have
\begin{eqnarray}\label{eq:app-three-antisymmetric}
\lefteqn{2\norm{P_{--}(\ket{e_1}\otimes\ket{x_1})}^2
+\norm{P_{--}(\ket{e_1}\otimes\ket{x_2}+\ket{e_2}\otimes\ket{x_1})}^2
+2\norm{P_{--}(\ket{e_2}\otimes\ket{x_2})}^2}
\nonumber\\
&=&\frac14(3(\norm{X_1}_F^2+\norm{X_2}_F^2)
 -\sum_{i,j=1}^2\norm{E_i^*X_j}_F^2
 -\sum_{i,j=1}^2\norm{X_jE_i^*}_F^2\nonumber\\
&&+\norm G_F^2
 -\norm{E_1^*X_1+E_2^*X_2}_F^2
 -\norm{X_1E_1^*+X_2E_2^*}_F^2
 +|\Tr G|^2).
\end{eqnarray}
To see this, note that $\norm{P_{--}\ket{v}}^2=\ip{v}{P_{--}v}$, expand the formula for $P_{--}$ in \eqref{eq:projections}. It then suffices to apply Eqs.~\eqref{eq:swap-A}--\eqref{eq:swap-full} to the resulting inner products.

Consequently, a direct computation gives
\begin{eqnarray}\label{eq:app-PS-vector}
 P_{\Sym^2\mathcal S}\ket{w}
 =G_{12}\ket{e_1}\otimes\ket{e_1}
 +\frac{G_{22}-G_{11}}{2}
 (\ket{e_1}\otimes\ket{e_2}+\ket{e_2}\otimes\ket{e_1})
 -G_{21}\ket{e_2}\otimes\ket{e_2}.
\end{eqnarray}
Hence
\begin{eqnarray}\label{eq:app-PS-norm}
 \norm{P_{\Sym^2\mathcal S}\ket{w}}^2
 =\norm G_F^2-\frac12|\Tr G|^2.
\end{eqnarray}
Similarly, Eqs.~\eqref{eq:swap-A}--\eqref{eq:swap-full} and the formula for $P_{++}$ in \eqref{eq:projections} imply
\begin{eqnarray}\label{eq:app-Ppp}
 \norm{P_{++}\ket{w}}^2
 &=&\frac14(\norm{X_1}_F^2+\norm{X_2}_F^2
 +\sum_{i,j=1}^2\norm{E_i^*X_j}_F^2
 +\sum_{i,j=1}^2\norm{X_jE_i^*}_F^2\nonumber\\
 &+&\norm G_F^2-|\Tr G|^2
 -\norm{E_1^*X_1+E_2^*X_2}_F^2
 -\norm{X_1E_1^*+X_2E_2^*}_F^2).
\end{eqnarray}
Adding \eqref{eq:app-three-antisymmetric} and \eqref{eq:app-Ppp}, and then subtracting one half of \eqref{eq:app-PS-norm}, shows that the right-hand side of \eqref{eq:three-square} equals \eqref{eq:app-q-direct}. This proves \eqref{eq:three-square}.
$\hfill\square$

\bibliographystyle{unsrt}
\bibliography{references}

\end{document}